\newtheorem{theorem}{Theorem} 
\newtheorem{example}{Remarks}
\newcommand{\ee}{\mathbf{e}}
\newcommand {\R}{\mathbb{R}}
\newcommand {\N}{\mathbb{N}}
\newcommand {\PP}{\mathcal{P}}
\newcommand {\gU}{\mathfrak{U}}
\newcommand {\gH}{\mathfrak{H}}
\DeclareMathOperator{\Dom}{Dom}
\DeclareMathOperator{\const}{const}
\DeclareMathOperator{\id}{id}
\newcommand{\dd}{\mathrm{d}}
\date{}
\title{On the dynamics created by a time--dependent Aharonov--Bohm
  flux}
\author{J.~Asch \\
  CPT-CNRS, Luminy Case 907, F-13288 Marseille Cedex 9, France \\
  e-mail: asch@cpt.univ-mrs.fr \\[2ex]
  P.~\v{S}\v{t}ov\'{i}\v{c}ek\thanks{ Partially supported by the
    grants No.~201/05/0857 of the Grant Agency of the Czech Republic
    and
    No.~LC06002 of the Ministry of Education of the Czech Republic } \\
  Faculty of Nuclear Science, Czech Technical University,\\
  Trojanova 13, 120 00 Prague, Czech Republic \\
  e-mail: stovicek@kmalpha.fjfi.cvut.cz }
\begin{document}

\maketitle
\begin{abstract}
  We study the dynamics of classical and quantum particles moving in a
  punctured plane under the influence of a homogeneous magnetic field
  and driven by a time-dependent singular flux tube through the hole.
\end{abstract}

\noindent {\bf Keywords:} Aharonov--Bohm flux, time--dependent
Hamiltonian, quantum Hall effect.

\section{Introduction}

The model under consideration was introduced by physicists in order to
understand the Integer Quantum Hall effect and much investigated by
mathematical physicists who introduced topological indices in order to
explain the quantization of charge transport observed in the
experiments; consult \cite{Elgart} for an access to the literature.

Let $m>0$, $e>0$, $\hbar>0$ be physical parameters,
$q\in\R^2\setminus\{0\}$, $q^{\perp}:=(-q_2,q_1)$ and $\Phi:\R\to\R$
be a smooth function. The time--dependent Hamiltonian is
\[
\frac{1}{2 m}\left(p-eA(t,q)\right)^2,\textrm{~~}
A(t,q) = \left(\frac{B}{2}
-\frac{\Phi(t)}{2\pi\vert q\vert^2}\right)q^{\perp}
\]
where in the classical case $p\in\R^{2}$ and the Hamiltonian is a
function on the phase space and where in the quantum case
$p=\left(-i\hbar\partial_{x},-i\hbar\partial_{y}\right)$ and the
Hamiltonian is the Friedrichs extension of
$(1/2m)\left(p-eA(t,q)\right)^2$ defined on
$C^{\infty}_{0}\setminus\{0\}$.

In the quantum case we discuss the meaning of the propagator and show
that an adiabatic approximation is valid. To this end we introduce the
notion of a propagator weakly associated to a time-dependent
Hamiltonian. A detailed presentation is given in
\cite{AschHradeckyStovicek}.
  
For the classical case we show: in the past the center is bound and
the particles spiral inward towards the flux line, their motion being
accompanied by energy loss; after hitting the puncture they become
``conducting'', i.e., the motion becomes a cycloid around an outward
drifting center orthogonal to the induced electric field.  The
outgoing drift is without energy loss.

The latter results have not been published yet but can be found in
preprint \cite{AschStovicek}.  Finally let us note that the dynamics
of the classical system without magnetic field was discussed in
\cite{AschBenguriaStovicek}.

\section{The quantum case}

\subsection{Existence and adiabatic approximation}

We discuss the case $\partial_{t}\Phi=\const$. After rescaling the
physical parameters and restricting ourselves to a sector of fixed
angular momentum we consider in $L^{2}((0,\infty),r\dd{}r)$ the
operator
\[
H(s) = -\frac{1}{r}\partial_{r}r\partial_{r}
+\frac{1}{r^{2}}\left(s+\frac{r^{2}}{2}\right)^{\!2}.
\]
which is essentially selfadjoint on
$C_{0}^{\infty}\left(0,\infty)\right)$ iff $\vert s\vert\ge1$, and
defined by the regular boundary condition at $r\to0$ for $\vert
s\vert<1$. We study the ``adiabatic'' limit ($\varepsilon\to 0$) of
the evolution equation
\[
i\varepsilon\partial_{s}U(s,s_{0})\psi=H(s)U(s,s_{0})\psi
\]
for the propagator $U$. Now, $\Dom(H(s))$ is time--dependent and so
the existence of a unique solution of the evolution equation is not
assured (c.f. \cite{ReedSimon2}); on the other hand $\partial_{s}H(s)$
is not relatively bounded and the gaps between the eigenvalues,
$E_{n+1}(s)-E_{n}(s)$, are approximately constant in $n$ and thus the
known theorems (c.f. \cite{AvronElgart}) do not assure the validity of
the adiabatic approximation.

Our solution to these problems is the following: we use the explicit
knowledge of the spectral measure of $H(s)$ to show the existence of
an ``adiabatic'' propagator $U_{ad}$. $U_{ad}$ in turn is used to
define a unique propagator $U_{w}$ weakly related to $H(s)$. Then we
show that $U_{ad}$ is an approximation of $U_{w}$ (see Section~2.2 for
the weak relationship).

The spectrum of $H(s)$ is discrete. Denote respectively $E_{n}(s)$,
$\psi_{n}(s)$, $P_{n}(s)$ the eigenvalues, eigenfunctions (chosen
real) and eigenprojections; let
$\PP(s):=i\sum_{\N}(\partial_{s}P_{n})P_{n}(s)$. Define $H_{ad}(s)$
and its propagator $U_{ad}$ by
\[
H_{ad}(s) := H(s)+\varepsilon\PP(s)\textrm{~~and~~}
U_{ad}(s)\psi_{n}(0) := \exp\!\left({-\frac{i}{\varepsilon}
    \int_{0}^{s}E_{n}}\right)\psi_{n}(s).
\]

\begin{theorem}{Theorem}
  For $s\ge0$,
  \begin{enumerate}
  \item $\Vert\PP(s)\Vert\le M(s)$ where $M(s)$ is a positive
    increasing function on $\R_+$,
  \item $\exists\Gamma(s)$ differentiable such that $\PP =i[H,\Gamma]$ and
    $\Vert\Gamma(s)\Vert+\Vert\partial_{s}\Gamma(s)\Vert\le\const$,
  \item 
    \[
    \left\Vert\int_{0}^{s}U_{ad}^{-1}\PP U_{ad}\right\Vert
    \le\const\ \varepsilon s,
    \]
  \item For $C(s)$ defined by
    $i\partial_{s}C(s)=-(U_{ad}^{-1}\PP{}U_{ad})(s)C(s)$, $C(0)=\id$,
    it holds
    \[ 
    \Vert C(s)-\id \Vert\le\const\ \varepsilon M(s)\exp(s M(s)).
    \]
  \end{enumerate}
\end{theorem}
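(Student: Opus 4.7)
The strategy is the classical adiabatic scheme: get quantitative control on the non-adiabatic coupling $\PP$, absorb it by an integration by parts through a $\Gamma$ satisfying $\PP=i[H,\Gamma]$, and close the loop by Gronwall. For (1) and (2) I start from first-order perturbation theory. Differentiating $H(s)\psi_n(s)=E_n(s)\psi_n(s)$ in~$s$ and pairing with $\psi_m$, $m\neq n$, gives $\langle\psi_m,\partial_s\psi_n\rangle=\langle\psi_m,(\partial_sH)\psi_n\rangle/(E_n-E_m)$, while $\langle\psi_n,\partial_s\psi_n\rangle=0$ because the $\psi_n$ are real and normalized. Expanding $\partial_sP_n=\vert\partial_s\psi_n\rangle\langle\psi_n\vert+\vert\psi_n\rangle\langle\partial_s\psi_n\vert$ and multiplying by $P_n$ on the right yields
\[
\PP(s)=i\sum_{m\neq n}\frac{\langle\psi_m,(\partial_sH)\psi_n\rangle}{E_n(s)-E_m(s)}\,\vert\psi_m(s)\rangle\langle\psi_n(s)\vert.
\]
For~(1) I estimate this sum using the explicit Laguerre-function representation of $\psi_n(s)$, the formula $\partial_sH(s)=(2s+r^{2})/r^{2}$, and the fact that the consecutive gaps $E_{n+1}-E_n$ stay bounded below on $[0,s]$; the outcome is an increasing function $M(s)$. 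For~(2) I set $\Gamma(s)=-\sum_{m\neq n}\langle\psi_m,(\partial_sH)\psi_n\rangle(E_n-E_m)^{-2}\vert\psi_m\rangle\langle\psi_n\vert$, check $i[H,\Gamma]=\PP$ by applying $H$ on either side of the rank-one summands, and observe that the extra $(E_n-E_m)^{-1}$ factor relative to $\PP$ supplies enough off-diagonal decay to keep $\Vert\Gamma\Vert$ and $\Vert\partial_s\Gamma\Vert$ uniformly bounded in~$s$.

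Part~(3) is the essential integration by parts of adiabatic theory. From $i\varepsilon\partial_sU_{ad}=H_{ad}U_{ad}$ with $H_{ad}=H+\varepsilon\PP$ one has $i\varepsilon\partial_s(U_{ad}^{-1}\Gamma U_{ad})=-U_{ad}^{-1}[H_{ad},\Gamma]U_{ad}+i\varepsilon\,U_{ad}^{-1}(\partial_s\Gamma)U_{ad}$, and $[H_{ad},\Gamma]=[H,\Gamma]+\varepsilon[\PP,\Gamma]=-i\PP+\varepsilon[\PP,\Gamma]$. Solving for $U_{ad}^{-1}\PP U_{ad}$ gives
\[
U_{ad}^{-1}\PP U_{ad}=\varepsilon\,\partial_s(U_{ad}^{-1}\Gamma U_{ad})-\varepsilon\,U_{ad}^{-1}(\partial_s\Gamma)U_{ad}-i\varepsilon\,U_{ad}^{-1}[\PP,\Gamma]U_{ad}.
\]
Integration from $0$ to~$s$ turns the first term into a boundary contribution of norm $\le2\varepsilon\Vert\Gamma\Vert$, the second into $\le\varepsilon s\sup\Vert\partial_s\Gamma\Vert$, and the third into $\le2\varepsilon\Vert\Gamma\Vert\int_0^sM(r)\,\dd r$; on any fixed interval $[0,s]$ each is absorbed into $\const\cdot\varepsilon s$.

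For~(4) I rewrite the defining ODE as $C(s)-\id=-i\int_0^sU_{ad}^{-1}\PP U_{ad}\,\dd r-i\int_0^s(U_{ad}^{-1}\PP U_{ad})(r)(C(r)-\id)\,\dd r$; by~(3) the first integral is of order $\varepsilon$ in operator norm, while $\Vert\PP(r)\Vert\le M(r)$ makes the second bounded by $\int_0^sM(r)\Vert C(r)-\id\Vert\,\dd r$, so Gronwall's inequality produces the advertised bound $\const\cdot\varepsilon M(s)\exp(sM(s))$. The genuinely hard step is~(1): a uniform-in-$n$ estimate of the matrix elements $\langle\psi_m,(\partial_sH)\psi_n\rangle$ and their summation against the gaps requires concrete Laguerre-polynomial identities, and this is the place where the specific form of $H(s)$ really enters; once~(1) is in hand, steps (2)--(4) follow the classical adiabatic pattern essentially by the algebra above.
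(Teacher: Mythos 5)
Your overall architecture --- the commutator equation $\PP=i[H,\Gamma]$, integration by parts along $U_{ad}$, and a Dyson/Gronwall argument for $C$ --- is the same adiabatic scheme that underlies the paper's proof, and your algebra for parts (2)--(4) checks out: the identity $i[H,\Gamma]=\PP$ for your choice of $\Gamma$, the formula for $i\varepsilon\partial_s(U_{ad}^{-1}\Gamma U_{ad})$, and the Gronwall step are all correct. The genuine gap is in part (1), and it is not merely a step you have deferred: the method you propose for it would not close. Lower-boundedness of the gaps $E_{n+1}-E_n$ together with pointwise control of $\langle\psi_m,(\partial_sH)\psi_n\rangle$ gives only the entrywise bound $\vert\langle\psi_m,\partial_s\psi_n\rangle\vert\lesssim\frac{1}{\vert n-m\vert}\left(\frac{m+1}{n+1}\right)^{s/2}$, and an entrywise bound of Hilbert--matrix type does not yield an $l^2(\N)\to l^2(\N)$ operator bound by naive summation: a Schur test using the $1/\vert n-m\vert$ decay alone diverges logarithmically, and the whole point is to exploit the ratio factor $((m+1)/(n+1))^{s/2}$ in a genuinely two-dimensional way. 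This is exactly where the paper invests its effort: it reduces the matrix bound to the norm of a concrete self-adjoint integral operator on $L^{2}((0,\infty),\dd x)$ with kernel $K(x,y)=-\frac{i}{y}(x/y)^{s}$ for $x<y$ (and the conjugate expression for $x>y$), proves $\Vert K\Vert\le(s+\frac{1}{2})^{-1}$, and then states that several more estimates of the same nature are needed. So the hard analytic content is a continuous-kernel Hilbert-type inequality, not a Laguerre identity combined with a gap estimate.

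A smaller point of bookkeeping: in your part (3) the commutator term $-i\varepsilon\int_0^sU_{ad}^{-1}[\PP,\Gamma]U_{ad}$ is bounded by $2\varepsilon\Vert\Gamma\Vert\int_0^sM(r)\,\dd r$, which is of order $\varepsilon s$ only if $M$ is bounded; since part (1) asserts only that $M$ is increasing, you must either track the $M$-dependence into the constant of (3) or justify a uniform bound on $\Vert\PP\Vert$ at this stage. Likewise in (4) the first integral is $O(\varepsilon s)$ by (3), not $O(\varepsilon)$. These do not affect the structure of the argument, but they matter if the constants in (3) and (4) are meant to be independent of $s$.
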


\begin{proof}[Comments on the proof]
  The main problem is to control the operator bound on $l^{2}(\bf\N)$
  of the matrix
  \[
  \langle\psi_{m}, \dot{\psi}_{n}\rangle
  \sim \frac{\langle\psi_{m},\dot{H}\psi_{n}\rangle}{E_{n}-E_{m}}
  \sim \frac{1}{n-m}\left(\frac{m+1}{n+1}\right)^{s/2}.
  \]
  This is done in a number of steps. As an illustration, the first
  step is to find a bound on $L^{2}((0,\infty),\dd{}x)$ for the
  selfadjoint integral operator with the kernel
  \[
  K(x,y)=-\frac{i}{y}\left(\frac{x}{y}\right)^{s}\textrm{~~for~}x<y,
  \textrm{~~}K(x,y)=\frac{i}{x}\left(\frac{y}{x}\right)^{s}
  \textrm{~~for~}x>y.
  \]
  The bound reads $\Vert K\Vert\le(s+\frac{1}{2})^{-1}$. But more
  steps of similar nature are needed to complete the proof.
\end{proof}

\begin{example}
  \begin{enumerate}
  \item $\Dom(H_{ad}(s))=\Dom(H(s))$,
    $U_{ad}(s)(\Dom(H_{ad}(0)))=\Dom(H_{ad}(s))$,

    $i\varepsilon\partial_{s}U_{ad}(s)\psi=H_{ad}(s)U_{ad}(s)\psi,
    \quad\forall\psi\in\Dom(H(0))$.
  \item $C(s)$ is well defined by the Dyson formula.
  \item $U_{w}(s):=U_{ad}(s)C(s)$ is a propagator and the candidate to
    be generated by $H(s)$. Further it holds
    \[
    \Vert U_{w}(s)-U_{ad}(s)\Vert\le\varepsilon M(s)\exp(s M(s)).
    \]
  \item It is {an open question} whether $C(s)$ preserves $\Dom(H(0))$
    and thus whether\\ $U(s)\Dom(H(0))\subset \Dom(H(s))$.
\end{enumerate}

\end{example}

\subsection{Weakly associated propagator}

While we cannot show that the propagator $U_{w}$ is the propagator of
$H(s)$ we can show that it is the unique propagator weakly associated
to $\{H(s)\}$; so if the propagator for $H(s)$ exists, it equals
$U_{w}$.

The definition of weak association relies heavily on the notion of the
quasi-energy operator which is directly related to the propagator:
$K=\gU(-i\partial_{s})\gU^{\ast}$ where
$\gU=\int^{\oplus}U_w(s,0)\,ds$. We say that a propagator $U_w$ is
weakly associated to $H(s)$ iff
\[
K=\overline{-i\partial_{s}+\gH}\textrm{~~~where~}
\gH=\int^{\oplus}_{\R}H(s)\,ds.
\]

One can actually prove that in this way introduced notion of weak
association generalizes the standard relationship between a propagator
and a Hamiltonian as well as that at most one propagator can be weakly
associated to a Hamiltonian. For details see
\cite{AschHradeckyStovicek}.

\section{The classical case}

We again discuss the linear case $\Phi(t)=\Phi_{0}t$. After a
rescaling one is lead to consider the Hamiltonian flow of
\[
H(s) = \frac{1}{2}\left(p-a(q)\right)^{2},\textrm{~~with~}
a(q) := \left(\frac{1}{2}-{\phi}\,\frac{t}{q^{2}}\right)q^{\perp},
\]
for ${\phi}:=e\Phi_{0}/(2\pi\omega)$, $\omega:=eB/m$. Because of the
cycloid--type nature of the trajectories $q(t)$ around a moving center
$c$ we use the natural splitting $q=c+v^{\perp}$ where $v:=p-a(q)$,
$c:=q-v^{\perp}$. Let us denote
\[
\ee(\varphi):=\left(\cos\varphi,\sin\varphi\right).
\]
An appropriate canonical coordinate system is then defined so that
\[
q=\vert c\vert\ee(\varphi_{1})+\vert v\vert \ee(-\varphi_{2}).
\]
The action--angle coordinates read $I_{1}=\vert{}c\vert^{2}/2$,
$I_{2}=H$, $\varphi_{1}$, $\varphi_{2}$; the transformed Hamiltonian
is an integral of motion
\[
K(\varphi,I) = I_2-{\phi}\arg(\sqrt{2I_1}\,\ee(\varphi_1)
+\sqrt{2I_2}\,\ee(-\varphi_2)).
\]
The fundamental relation
between the center $c$ and the energy is
\[
\frac{\vert c(s)\vert^{2}}{2}=H(s)+{\phi}(s-s_{0})
\]
where $s_{0}$ is a constant depending on the trajectory. The
asymptotic behavior described below is illustrated by Fig.~1 depicting
a typical trajectory.

\begin{figure}
  \begin{center}
    \epsfig{figure=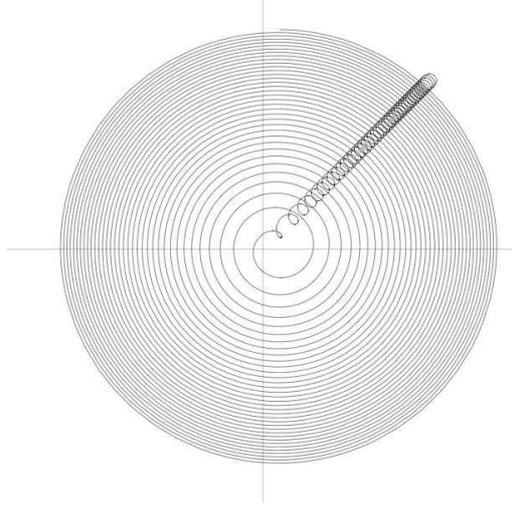,width=0.49\linewidth}
    \vskip-1.1\baselineskip
    \caption{A typical trajectory of the Hamiltonian $H(s)$}
  \end{center}
\end{figure}

\begin{theorem}{Theorem} 
  For any fixed initial condition there exists a constant $a_{0}>0$
  such that
  \begin{eqnarray*}
    && \frac{q(s)}{\sqrt{s}} \to_{s\to\infty}
    \sqrt{2{\phi}}\,\,
    \ee\!\left(\frac{a_{0}^{2}}{4{\phi}^2}-\frac{K}{{\phi}}\right),
    \textrm{~~~}
    \frac{q(s)}{\sqrt{|s|}} \sim_{s\to-\infty}
    \sqrt{2{\phi} }\,\,\ee(-s),\\
    && H(s) \to_{s\to\infty} \frac{a_{0}^{2}}{4{\phi}},
    \textrm{~~~}
    \frac{H(s)}{\vert s\vert} \to_{s\to-\infty} {\phi}.
  \end{eqnarray*}
\end{theorem}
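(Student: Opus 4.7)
The argument rests on two conservation laws. The transformed Hamiltonian $K=I_2-\phi\arg(\sqrt{2I_1}\,\ee(\varphi_1)+\sqrt{2I_2}\,\ee(-\varphi_2))$ is exactly $H-\phi\arg q$, since $q=\sqrt{2I_1}\,\ee(\varphi_1)+\sqrt{2I_2}\,\ee(-\varphi_2)$; its invariance follows from the direct work--energy computation
\[
\frac{\dd H}{\dd s}=-v\cdot\frac{\partial a}{\partial s}=\phi\,\frac{v\cdot q^\perp}{q^{2}}=\phi\,\frac{\dd\arg q}{\dd s}
\]
(with $\arg q$ lifted continuously along the trajectory). Together with the fundamental relation $|c(s)|^{2}/2=H(s)+\phi(s-s_{0})$, this produces two functionally independent integrals of the extended two-degree-of-freedom flow, so the problem is integrable.

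\textbf{The limit $s\to+\infty$.} The core step is to prove that $H(s)$ converges to some $H_{\infty}\ge0$. Once this is in hand, the fundamental relation yields $|c(s)|^{2}/(2s)\to\phi$ while $|v|^{2}=2H$ stays bounded, so from $q=c+v^{\perp}$ one gets $|q(s)|/\sqrt{s}\to\sqrt{2\phi}$; the direction is forced by the conservation law,
\[
\arg q(s)=\frac{H(s)-K}{\phi}\ \longrightarrow\ \frac{H_{\infty}-K}{\phi}=\frac{a_{0}^{2}}{4\phi^{2}}-\frac{K}{\phi}
\]
upon setting $a_{0}:=2\sqrt{\phi H_{\infty}}$, which also recovers $H(s)\to a_{0}^{2}/(4\phi)$. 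Convergence of $H$ itself comes from the integrable reduction: using $I_1=I_2+\phi(s-s_0)$ to eliminate $(\varphi_1,I_1)$ on a level set of $K$ leaves an autonomous system in $(\varphi_2,I_2)$, and in the regime $I_1\gg I_2$ the $\arg$ term in $K$ tends uniformly to $\varphi_1$, pinching $I_2$ to a limit determined by $K$ and $\varphi_1(\infty)$.

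\textbf{The limit $s\to-\infty$.} The symmetric claim is that $|c(s)|$ remains bounded. The same level-set analysis, now on the branch where $I_{1}\to0$, yields this boundedness; the fundamental relation then forces $H(s)=-\phi s+O(1)$, whence $H(s)/|s|\to\phi$. Since $|v|=\sqrt{2H}\to\infty$ dominates the bounded $|c|$, we have $q\sim v^{\perp}$ and $|q(s)|/\sqrt{|s|}\to\sqrt{2\phi}$. The direction is again pinned by $K$-conservation: $\arg q(s)=(H(s)-K)/\phi\sim|s|=-s$, giving the asserted $\ee(-s)$ behaviour.

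\textbf{Main obstacle.} The analytic heart of the proof is the quantitative control of the actions in the two asymptotic regimes --- namely the uniform $\arg$-term estimate on the $K$-level set that pins $I_{2}$ as $s\to+\infty$, and the corresponding estimate keeping $I_{1}$ bounded as $s\to-\infty$. Everything else is an algebraic unpacking of the two conservation laws together with the cycloid decomposition $q=c+v^{\perp}$.
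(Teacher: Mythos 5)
Your algebraic skeleton is sound and matches the paper's setup: $K=H-\phi\arg q$ is indeed the conserved transformed Hamiltonian, the fundamental relation $|c(s)|^{2}/2=H(s)+\phi(s-s_{0})$ is the second integral, and your bookkeeping correctly reduces the theorem to two analytic claims --- convergence of $H(s)$ as $s\to+\infty$ and boundedness of $|c(s)|$ as $s\to-\infty$ --- with the directions then forced by $K$-conservation. But those two claims are the entire content of the theorem, and your argument for them does not hold up. First, the reduced system in $(\varphi_{2},I_{2})$ obtained by eliminating $I_{1}$ is \emph{not} autonomous: $s$ enters explicitly through $I_{1}=I_{2}+\phi(s-s_{0})$, so possessing two integrals does not trivialize the flow. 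Second, the ``pinching'' step is circular: writing $I_{2}\approx K+\phi\varphi_{1}$ pins $I_{2}$ only if $\varphi_{1}(s)$ converges, which is essentially the directional statement you are trying to prove. Third, the naive estimate fails quantitatively: $\dot I_{2}=\phi\,\partial_{\varphi_{2}}(\arg q)=O(\sqrt{I_{2}/I_{1}})=O(\sqrt{I_{2}/s})$, which is not integrable in $s$, so convergence of $I_{2}$ cannot come from uniform smallness of the $\arg$ term; it can only come from the rapid rotation of $\varphi_{2}$ (at rate $\approx 1$), i.e., from an averaging or stationary-phase mechanism that your proposal never invokes.

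That mechanism is exactly where the paper's proof does its real work: it passes to the variables $J=I_{1}+I_{2}$, $\psi=\varphi_{1}+\varphi_{2}$, and rewrites the equations of motion as Volterra-type integral equations whose kernels are built from the Bessel functions $J_{0},J_{1},Y_{0},Y_{1}$, with a nonlinearity $F(\tau,x_{1},x_{2})$ that is controllable in the relevant regimes. The Bessel kernels encode precisely the oscillation your estimate discards, and the iterative solution of these integral equations is what yields the existence of the limits and of the constant $a_{0}$ (which appears as an amplitude in the Bessel asymptotics). In short: correct reduction, correct target, but the analytic heart --- which you yourself flag as the ``main obstacle'' --- is asserted rather than proved, and the pinching argument you propose in its place would not succeed as stated.
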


\begin{proof}
  The problem can be reduced to a two-dimensional system with
  coordinates $J:=I_{1}+I_{2}$ and $\psi:=\varphi_{1}+\varphi_{2}$.
  After a change of variables one arrives at a system of differential
  equations which is equivalent to the integral equations
  \begin{eqnarray*}
    x_{j}(s) &=& c_{1}sJ_{j-1}(s)+c_{2}sY_{j-1}(s)\\
    && -\,\frac{\pi s}{2}\int_{s}^{\infty}
    \big(Y_{j-1}(s)J_{1}(\tau)-J_{j-1}(s)Y_{1}(\tau)\big)
    F(\tau,x_{1}(\tau),x_{2}(\tau))d\tau,
  \end{eqnarray*}
  $j=1,2$, where the numbers $c_{1}$, $c_{2}$ involve initial
  conditions and
  \[
  F(s,x_{1},x_{2}) := {\phi} -\frac{x_{1}}{s}
  -\frac{{\phi}^{2}s}{\sqrt{x_{1}^{2}+(x_{2}-{\phi})^{2}
      +{\phi}^{2}s^{2}}+x_{1}}\,.
  \]
  The integral equations allow for iterative solution and are well
  suited for asymptotic analysis.
\end{proof}

\end{document}